\theoremstyle{plain}
\newtheorem{theorem}{Theorem}[section]
\newtheorem{corollary}[theorem]{Corollary}
\newtheorem{proposition}[theorem]{Proposition}
\theoremstyle{definition}
\newtheorem{definition}{Definition}[section]
\theoremstyle{remark}
\title{\Large{\textbf{Note on the Greedy Parsing Optimality for Dictionary-Based Text Compression}}}
\author{
Maxime Crochemore$^{1,2}$,
Alessio Langiu$^1$ \and
Filippo Mignosi$^3$
\\
\\
$^1$ King's College London, London, UK \\
\textit{\{Maxime.Crochemore, Alessio.Langiu\}@kcl.ac.uk} \and
$^2$ Universit\'e Paris-Est, Paris, France \and
$^3$ Universit\`a dell'Aquila, L'Aquila, Italy \\
\textit{Filippo.Mignosi@di.univaq.it} 
}
\def\@maketitle{%
  \newpage
  \null
  \vskip 2em%
  \begin{center}%
  \let \footnote \thanks
    {\LARGE \@title \par}%
    \vskip 1.5em%
    {\large
      \lineskip .5em%
      \begin{tabular}[t]{c}%
        \@author
      \end{tabular}\par}%
  \end{center}%
  \par
  \vskip 1.5em}
\begin{document}

\maketitle

\begin{abstract}
Dynamic dictionary-based compression schemes are the most daily used data compression schemes since they appeared in the foundational papers of Ziv and Lempel in 1977, commonly referred to as LZ77.
Their work is the base of Deflate, gZip, WinZip, 7Zip and many others compression software. 
All of those compression schemes use variants of the greedy approach to parse the text into dictionary phrases.
Greedy parsing optimality was proved by Cohn et al. (1996) for fixed length code and unbounded dictionaries.
The optimality of the greedy parsing was never proved for bounded size dictionary which actually all of those schemes require.

We define the suffix-closed property for dynamic dictionaries and we 
show that any LZ77-based dictionary, including the bounded variants, 
satisfy this property. Under this condition we prove the optimality of the greedy parsing as a variant of the proof by Cohn et al.
\end{abstract}

\section*{Introduction}\label{sec:intro}

The foundational Ziv and Lempel LZ77 algorithm \cite{lz77} is the basis of almost all the famous dictionary compressors, like gZip, PkZip, WinZip and 7Zip.
They consider a portion of the previous text as a dictionary, i.e. they use a dynamic dictionary formed by the set of all the factors of the text up to the current position within a sliding window of fixed size.
A dictionary phrase refers to an occurrence of such phrase in the text by using the couple (length, offset), where the offset is the backward offset w.r.t. the current position. Since a phrase is usually repeated more than once along the text and since pointers with smaller offset have usually a smaller representation, the occurrence close to the current position is preferred.

Furthermore, in LZ77 based compression, the greedy approach is used to parse the text into phrases, i.e, in an iterative way, the longest match between the dictionary and the forwarding text is chosen.
This is commonly called the greedy phrase.
Some LZ77-based algorithms as Deflate algorithm and the compressors based on them, like gZip and PkZip, use variants of the greedy approach to parse the text. 
Deflate64 algorithm implemented in WinZip and 7zip, contains some heuristics to parse differently the text in order to improve the compression ratio, but its time complexity was never clearly stated. 

The research about dictionary-based data compression and parsing optimality 
produced in the last decades some noticeable results. Let us recall some of them within a brief historical overview.

In '73, the Wagner's paper (see \cite{DBLP:journals/cacm/Wagner73}) shows a $O(n\ |D|^2)$ dynamic programming solution for the parsing problem in the case of static dictionary, where 
$n$ is the text length, $D$ is the dictionary and $|D|$ is the dictionary cardinality, i.e. the number of phrases belonging to the dictionary. Dictionary phrases can overlap each other. 

In '74 Schuegraf et al. (see \cite{Schuegraf}) showed that the parsing problem is equal to the shortest path problem on a graph associated to both a text and a static dictionary.
Since that the full graph for a text of length $n$ can have $O(n^2)$ edges in the worst case and the minimal path algorithm has $O(V +E)$ complexity, we have another solution for the parsing problem of  $O(n^2)$ complexity.

In '76 Ziv and Lempel (see \cite{DBLP:journals/tit/LempelZ76}) introduced a new measure of complexity for a given text defined as the number of phrases produced by parsing the text with a dynamic prefix closed dictionary. This preliminary work early leads to the foundational dynamic dictionary-based compression methods presented in \cite{lz77,lz78}, a.k.a. LZ77 and LZ78, appeared in '77 and '78 respectively.
They both use an online greedy parsing that is simple and fast in practice. 
Those compression methods use both an uniform (constant) cost model for the dictionary pointers, i.e. they use bounded size dictionaries and fixed length code for dictionary phrase references. The greedy approach used to parse the text is realized by choosing the longest match between the dictionary phrases and the forwarding text, scanning the text left to right, until the whole text is covered. After any dictionary phrase in the parsing, that can also be the empty word, a single plain text symbol is used. This guaranteed the existence of a parsing for any text and any dictionary.

In '82, the LZSS compression algorithm, based on the LZ77 one, was presented (see \cite{DBLP:journals/jacm/StorerS82}). It improves the compression ratio and the execution time without changing the original parsing approach. The main difference is that a symbol is used only when there is no match between dictionary and text. 
It uses a flag bit to distinguish symbols from dictionary pointers in the parsing. In the same paper Storer et al. proved the optimality of the greedy parsing for the original LZ77 scheme with unbounded dictionary (see the Theorem 10 in \cite{DBLP:journals/jacm/StorerS82} with $p=1$).

In '84, LZW variant of LZ78 was introduced by Welch (see \cite{lzw}). This is one of the firsts theoretical compression method that use a dynamic dictionary and variable costs of pointers. The main difference w.r.t. LZ78 is that the text is supposed to be composed by symbol from a fixed alphabet, knew in advance. The dictionary is initialized with all the alphabet symbols. This guaranteed that there will be always at least one dictionary phrase matching a factor of the text starting at any position. Exploiting this property, the parsing is composed just by dictionary phrases, without using explicit symbols, leading to a better compression. The LZW scheme has been very appreciated by the research community, indeed plenty of LZW variants have been presented so far.

In '85, Hartman and Rodeh proved in \cite{rodeh1985} the optimality of the 
\emph{one-step-lookahead parsing} for prefix-closed static dictionary and uniform pointer cost. The main point of this approach is to chose the phrase that is the first phrase of the longest match between \emph{two} dictionary phrases and the text. In other words, if the current parsing cover the text up to the $i$th character, then it choose the phrase $w$ such that $ww'$ is the longest match with the text starting at position $i$, with $w,w'$ belonging to the dictionary.

In '89 and later in '92, the \emph{deflate} algorithm was presented and used in PkZip and gZip compressors. It uses a LZ77-like dictionary, the LZSS flag bit and variants of the greedy parsing. Both dictionary pointers and symbols are encoded by using a Huffman code. Those compression schemes early become so popular to be included in many communication protocol, commercial compression software and transmission devices. 

In '95, Horspool investigated in \cite{DBLP:conf/dcc/Horspool95} about the effect of non-greedy parsing in LZ-based compression. He highlighted that using the above \emph{one-step-lookahead parsing} in the case of dynamic dictionaries leads to better compression w.r.t. the one obtained by using the greedy parsing. Horspool showed some experimental results using the LZW algorithm and a new LZW variant that he presented in the same paper. 

In '96 the greedy parsing was ultimately proved by Cohn et al. (see \cite{DBLP:conf/dcc/CohnK96}) to be optimal for static suffix-closed dictionary under the uniform cost model. They also proved that the right to left greedy parsing is optimal for prefix-closed dictionaries. Notice that the greedy parsing can be computed in linear time. Since the LZ77 dictionary is ``assumed" to be suffix-closed, this is a more general result w.r.t. the previous Storer et al. one. We present more details about LZ77 dictionary and the suffix-closed property in the next section.

In '99, Matias and Sahinalp (see \cite{DBLP:conf/soda/MatiasS99}) gave a linear-time optimal parsing algorithm in the case of prefix-closed dynamic dictionary and uniform cost of dictionary pointer, i.e. the codeword of all the pointers have equal length. They extended the results given in \cite{rodeh1985}, \cite{DBLP:conf/dcc/Horspool95} and \cite{cancan} to the dynamic case. 
Matias and  Sahinalp called their parsing algorithm
\emph{Flexible Parsing}. It is also known as semi-greedy parsing.

In '09, Ferragina et al. (see \cite{FerraginaSODA09}) introduced an optimal parsing algorithm for LZ77-like dictionary and variable length code, where the code length is assumed to be the cost of a dictionary pointer. In this paper the parsing optimality refers to the compression optimality, i.e. the parsing which leads to the better compression.

In '10, Crochemore et al. (see \cite{cglmr_iwoca10} and the extended version \cite{cglmr_JDA2011}) introduced an optimal parsing for prefix-closed dictionaries and variable pointer costs. It was called  \textit{dictionary-symbolwise flexible parsing} and it fits to both the LZ77 and the LZ78 dictionary cases.
It uses a graph-based model for the parsing problem where each node represent a position in the text and edges represent dictionary phrases. Edges are weighted according to the bit length of the encoded length and offset pair. It works for the original LZ77 and LZ78 algorithms and for almost all of their known variants. Recently, a new data structure called Multilayer Suffix Tree was presented (see \cite{dccpaper2013}) to address the a weak version of the rightmost position problem, strictly related with the parsing optimality problem.

\medskip

The main goal of this paper is to better explain the relationship between the LZ77 dictionary variants and the suffix-closed property and to prove the optimality of the greedy parsing for all of those cases. 
This paper is organized as follow. In Section \ref{sec:suffixclosed} we formally define the suffix-closed property for dynamic dictionaries and we  show that any LZ77-based dictionary, including the bounded variants, satisfy this property. 
In Section \ref{sec:optimality} we prove the optimality of the greedy parsing for suffix-closed dictionaries as a variant of the proof by Cohn et al.

\section{Suffix-Closed Dynamic Dictionaries}\label{sec:suffixclosed}

In data compression field, a dictionary is a set of finite length sequences or phrases. It is shared between compressor and decompressor. A static dictionary is a fixed set of phrases that does not change along the compression-decompression process. It is known in advance w.r.t. to the input text. The weakness of this model is that the dictionary does not depend by the text and, therefore, it cannot get adapted to it. This leads to poor compression results for those text having few overlap with the used dictionary.

A dynamic dictionary is a set of phrases that can change along the compression-decompression process. It can be the empty set at the very beginning of the compression process or it can be already initialized. Subsequently, it get populated accordingly to a dictionary algorithm. Usually, also phrase deletion are supported in order to limit the dictionary size. Given a text $T$ of length $n$, for any point in time $0 \leq i < n$, we call $D_i$ the dictionary 
at time $i$ of the compression or decompression process, i.e. $D_i$ is the dictionary after that the first $i$ symbols of the text have already been processed.

A static dictionary $D$ is prefix-closed (suffix-closed) if and only if for any phrase $w\in D$ in the dictionary, all the prefixes (suffixes) of $w$ belong to the dictionary, i.e. $\mbox{\emph{suff}}(w)\subset D$ ($\mbox{\emph{pref}}(w)\subset D$).  
For instance, the dictionary $D=\{a, ba, aba, bba\}$ is suffix-closed.

The LZ77 dictionary is defined as the set of factors of a portion of the already processed text. In other world, for any text $T$ and at any time $i$, the dictionary is the set of factors of the text fitting to a sliding window of length $h$, i.e.  \emph{fact}$(T[i-h+1..i])$. At any time $i$, the dictionary $D_i$ is both prefix- and suffix-closed. 
The LZ78 dictionary is maintained, starting from the empty set, by inserting a phrase formed by a symbol concatenated to the greedy phrase. For instance, if at the moment $i$ the greedy phrase matching the text is $T[i..j]$, $i \leq j$, then the next dictionaries $D_{k}$, $i < k \leq +1j$, are set to $D_{i} \cup T[i..j]T[j+1]$. This construction algorithm maintains the prefix-closed property for any $D_i$ dictionaries.

The classic Cohn' and Khazan's result of '96 (see \cite{DBLP:conf/dcc/CohnK96}) states that if $D$ is a static suffix-closed dictionary, then the greedy parsing is optimal under the uniform cost assumption. 
Symmetrically, the reverse of the greedy parsing on the reversed text is optimal for static prefix-closed dictionary. Roughly speaking, the original proof concerns with suffix-closed dictionaries and shows that choosing the longest dictionary phrases guarantees to cover the text with the minimum number of phrases. 
Unfortunately, sice LZ77 and LZ78 dictionaries are not static, above results does not apply to them.

Let us focus on the suffix-closed definition of dynamic dictionaries.
Let us recall that a static dictionary is a set of words $D$.
A static dictionary is suffix-closed if and only if for any factor $w$ in the dictionary $D$ the set of suffixes $\mbox{\emph{suff}}(w)$ of $w$ is a subset of the dictionary, i.e. $\mbox{\emph{suff}}(w)\subset D$.
Turning into the dynamic settings, let us say that 
at any moment $i$, $0\geq i$,
a dynamic dictionary $D_i$ is a set of words.
The suffix-closed and the prefix-closed property have been commonly considered \emph{naturally} extended, without a formal definition, to the dynamic case with the additional condition ``at any time".
Therefore, what is commonly meant as a suffix-closed dynamic dictionary is just that, at any time $i$, the dictionary $D_i$ has the suffix-closed property.

Notice that this definition does not make any assumption on the relationship between dictionaries at two different moments and it does not suffice to extend the parsing optimality for static dictionary to the dynamic case.

We define the suffix-closed property for dynamic dictionaries as follows.

\begin{definition} \label{prp:dynsufclosed}
A dynamic dictionary $D$ has the suffix-closed property \textit{iff}, at any moment $i$,
for any dictionary phrase $w\in D_i$ and 
for any $0 \leq k < |w|$, 
the suffix $w_k=w[k..|w|-1]$ of $w$ of length $|w|-k$ 
is in $D_{i}$ \emph{and} in $D_{i+k}$.
\end{definition}

Notice that the above suffix-closed property imply the \emph{natural} one.

We say that a dictionary is \textit{non-decreasing} when $D_i\subset D_j$ for any $i,j$ points in time, with $i \leq j$. A static dictionary is obviously non-decreasing. Practically speaking, a dynamic dictionary is non-decreasing when it can only grow along the time. 
For instance, the original LZ78 dictionary is a non-decreasing dictionary because, at each algorithm step, one phrase is inserted into the dictionary. On the contrary, 
many practical implementation and variants of LZ78 dictionary are not non-decreasing. Because of space saving purpose, the size of the dictionary is bounded in practice and some phrases are deleted from the previou dictionary.

Since the LZ77 dictionary is defined as the set of factors of a sliding window, i.e. the backward text up to a certain distance,  LZ77 has not the non-decreasing property.

\begin{proposition}\label{pro:lz77strong}
The original LZ77 bounded dictionary is suffix-closed. 
The unbounded variant of the LZ77 dictionary is non-decreasing and suffix-closed.
\end{proposition}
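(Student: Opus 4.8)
The plan is to treat both dictionary variants almost in parallel, since each $D_i$ is by definition the set of factors of a text window $W_i$: for the unbounded variant $W_i$ is the whole processed prefix, and for the bounded variant $W_i$ is the length-$h$ suffix $T[i-h+1..i]$ of that prefix. Fix a time $i$, a phrase $w\in D_i$, and an index $0\le k<|w|$, and write $w_k=w[k..|w|-1]$ for the suffix whose membership we must certify; by Definition~\ref{prp:dynsufclosed} we have to show both $w_k\in D_i$ and $w_k\in D_{i+k}$. The first of these is just the \emph{natural} suffix-closedness already observed in Section~\ref{sec:suffixclosed}: since $D_i=\emph{fact}(W_i)$ and every suffix of a factor of $W_i$ is again a factor of $W_i$, from $w\in D_i$ we get $w_k\in D_i$ immediately, and this disposes of one half of the claim for both variants at once.

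For the unbounded variant I would first record that it is non-decreasing: when $i\le j$ the window $W_i$ is a prefix of $W_j$, hence $\emph{fact}(W_i)\subseteq \emph{fact}(W_j)$, i.e. $D_i\subseteq D_j$. In particular $w\in D_i\subseteq D_{i+k}$, and then applying the natural suffix-closedness of $D_{i+k}$ (same argument as above, now with window $W_{i+k}$) yields $w_k\in D_{i+k}$. This settles the unbounded case, and also the stated non-decreasing property.

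The bounded case is the only spot requiring a genuine (though short) argument, and it is bookkeeping on positions. Since $w\in D_i=\emph{fact}(T[i-h+1..i])$, pick an occurrence $w=T[p..p+|w|-1]$ with $i-h+1\le p$ and $p+|w|-1\le i$ (when $i<h$ the window start is truncated to the beginning of the text, which only relaxes the lower bound). Shifting this occurrence by $k$ gives $w_k=T[p+k..p+|w|-1]$, and I would verify it lies inside the window at time $i+k$: on the left, $p+k\ge (i-h+1)+k=(i+k)-h+1$; on the right, $p+|w|-1\le i\le i+k$. Hence $w_k\in \emph{fact}(T[(i+k)-h+1..i+k])=D_{i+k}$, as required. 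I would also note in passing, as already remarked in the text, that the bounded dictionary fails to be non-decreasing because the sliding window discards old positions, so the two sentences of the proposition genuinely need separate handling.

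The main — and essentially only — obstacle is getting the index arithmetic exactly right, in particular the boundary conventions when the window overruns the start of the text, and a sanity check that $i+k$ is still a legitimate time: this holds because in any parsing context the phrase $w$ also occurs forward from position $i$, so $i+k<i+|w|\le n$ and $D_{i+k}$ is well defined. Once these routine checks are in place, the proposition follows, and with it the applicability of the Cohn--Khazan-style argument to the LZ77 bounded dictionary.
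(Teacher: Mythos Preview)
Your proof is correct and follows essentially the same idea as the paper's: both rest on the observation that a factor of the sliding window at time $i$ has an occurrence whose right endpoint is at most $i$, so trimming $k$ symbols on the left keeps that occurrence inside the window at time $i+k$. The only difference is in presentation. The paper argues the \emph{one-step} version---writing $T[i-h+1..i]=au$ and $T[i-h+2..i+1]=ub$, so that every proper suffix of a phrase in $D_i$ lies in $D_{i+1}$---and then asserts that iterating this is equivalent to Definition~\ref{prp:dynsufclosed}; you instead fix an occurrence $T[p..p+|w|-1]$ and verify the two inequalities $p+k\ge (i+k)-h+1$ and $p+|w|-1\le i+k$ in one shot. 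Your direct index computation is slightly cleaner (it avoids the implicit induction on $k$), while the paper's $au/ub$ formulation makes the ``sliding'' intuition more visible; either way the content is the same.
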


\begin{proof}
Recall that, for any text $T$, the LZ77 dictionary is equal to  \emph{fact}$(T[i-h+1..i])$.
The LZ77 dictionary is unbounded when $h\geq |T|$. In this case, for any $i<|T|$, the dictionary $D_i$ is equal to the set $\mbox{\emph{fact}}(T[0 : i])$ that is obviously non-decreasing.

Let us focus on the general set $\mbox{\emph{fact}}(T[i-h+1 : i])$. 
This is a sliding window of size $h$ over the text $T$.
By the very definition of the set of factors, the dictionary $D_i$ is suffix-closed, at any moment $i$.
For any value $i$, let be $T[i-h+1 : i]=au$ and $T[i-h+2 : i+1]=ub$ with $a,b$ in $\Sigma$ and $u$ in $\Sigma^*$. Since all the proper suffixes of $au$ are also suffixes of $u$, then for any $w\in \mbox{\emph{fact}}(T[i-h+1 : i])=D_i$ the proper suffixes $w_k$ of length $|w|-k$, $1\leq k < |w|$, are also in $\mbox{\emph{fact}}(T[i-h+2 : i+1])=D_{i+1}$. 
Therefore, any proper suffix of $w\in D_i$ is also in $D_{i+1}$, for any $i$, $w$.
It easy to see that this property is equivalent to the suffix-closed property defined in Definition \ref{prp:dynsufclosed}.
\end{proof}

Let us now to refer to the effect of the prefix- and suffix-closed properties on the graph-based model of the parsing problem in order to visualize those concepts. 
Given a text $T$ and a
dictionary $D$, if $D$ has the strong suffix-closed property, then for any edge $(i,j)$ of the graph $G_{D,T}$ associated with the phrase $w\in D_i$, with $|w|=j-i$ and $w=T[i:j]$, then all the edges $(k,j)$, $i<k<j$ are into $G_{D,T}$. 
In the case of prefix closed dictionaries, as prefix edges start from the same node, the prefix of a dictionary phrase are all represented in the graph if the dictionary has just the \emph{natural} prefix-closed property.

\section{Greedy Parsing Optimality}\label{sec:optimality}

We want now to extend 
the elegant proof of Cohn et al. (see \cite{DBLP:conf/dcc/CohnK96}) to the case of suffix-closed dynamic dictionaries.

Given a text $T$ of length $n$ and a dynamic dictionary $D$ where, at the moment $i$-th with $0\leq i < n$, the text $T_i$ has been processed and $D_i$ is the dictionary  at time $i$. Recall that we are under the uniform cost assumption.

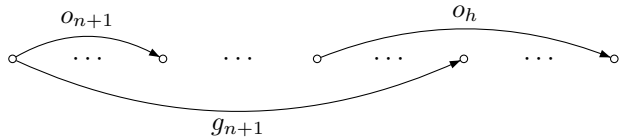
\begin{figure}[t] \centering{\footnotesize{\begin{picture}(120,20)(0,-10)
\gasset{Nw=1,Nh=1}
   \node(2)(20,0){}
   \node[Nframe=n](L)(30,0){$\cdots$} 
   \node(3)(40,0){}
   \node[Nframe=n](L)(50,0){$\cdots$} 
   \node(4)(60.5,0){}
   \node[Nframe=n](L)(70,0){$\cdots$} 
   \node(5)(80,0){}
   \node[Nframe=n](L)(90,0){$\cdots$} 
   \node(6)(100,0){}
   
   \drawedge[curvedepth=3](2,3){$o_{n+1}$}
   \drawedge[ELside=r,curvedepth=-7](2,5){$g_{n+1}$}
   \drawedge[curvedepth=4](4,6){$o_{h}$}
\end{picture}}
 \vspace{0.30cm}
\caption{Detail of the differences between parsing $\mathcal{O}$ and parsing $\mathcal{G}$ over a text $T$ between positions $|o_1\cdots o_n|$ and $|o_1\cdots o_h|$. Nodes and dots represent the text and edges represent parsing phrases as reported on edge labels.}
\label{fig:cohn}
}\end{figure}

\begin{theorem}\label{the:cohngeneralizzato}
The greedy parsing of $T$ is optimal for \textit{strong} suffix-closed dynamic dictionaries.
\end{theorem}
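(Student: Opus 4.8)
The plan is to mimic the Cohn--Khazan exchange argument but to replace the static suffix-closure hypothesis with the dynamic (``strong'') suffix-closed property of Definition~\ref{prp:dynsufclosed}, using it precisely at the one point where the original proof silently relied on $D$ being a fixed set. First I would fix an optimal parsing $\mathcal{O} = o_1 o_2 \cdots o_m$ of $T$ (with $o_k \in D_{|o_1\cdots o_{k-1}|}$) and the greedy parsing $\mathcal{G} = g_1 g_2 \cdots g_\ell$, and argue by contradiction that $\ell \le m$, i.e. that greedy uses no more phrases than any other legal parsing. The core claim to establish by induction on $n$ (the index of the phrase) is the \emph{greedy stays ahead} statement: $|g_1 \cdots g_n| \ge |o_1 \cdots o_n|$ for every $n$. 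Granting this for all $n$, taking $n = m$ gives $|g_1\cdots g_m| \ge |o_1\cdots o_m| = |T|$, so $\mathcal{G}$ has already covered the whole text within $m$ phrases, hence $\ell \le m$ and $\mathcal{G}$ is optimal.

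The inductive step is where the dynamic dictionary matters, and is essentially the content of Figure~\ref{fig:cohn}. Suppose $|g_1\cdots g_n| \ge |o_1\cdots o_n|$ but, for contradiction towards the ``ahead'' bound at step $n+1$, that $|g_1\cdots g_{n+1}| < |o_1\cdots o_{n+1}|$. Write $p = |g_1\cdots g_n|$ and $q = |o_1\cdots o_n|$, so $q \le p$. The phrase $o_{n+1}$ starts at position $q$ and ends at position $q + |o_{n+1}| > |g_1\cdots g_{n+1}| = p + |g_{n+1}|$. I want to exhibit, at text position $p$ (where greedy is about to choose $g_{n+1}$), a dictionary phrase in $D_p$ that is strictly longer than $g_{n+1}$, contradicting the greedy choice. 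The natural candidate is the suffix $o_{n+1}[\,(p-q)\,..\,|o_{n+1}|-1\,]$ of $o_{n+1}$, i.e. the part of $o_{n+1}$ lying past position $p$; call it $w_k$ with $k = p - q \ge 0$. Since $o_{n+1} \in D_q$ and $0 \le k < |o_{n+1}|$, the strong suffix-closed property gives $w_k \in D_{q+k} = D_p$ — this is exactly the ``and in $D_{i+k}$'' clause of Definition~\ref{prp:dynsufclosed}, which is precisely what the naive ``$D_i$ suffix-closed at every time'' hypothesis fails to provide. Its length is $|o_{n+1}| - (p-q) = (q + |o_{n+1}|) - p > |g_{n+1}|$, so greedy would have had to pick something at least this long, contradiction. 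Hence $|g_1\cdots g_{n+1}| \ge |o_1\cdots o_{n+1}|$, closing the induction.

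I would also handle two bookkeeping points cleanly. First, the ``single plain symbol after every phrase'' (or LZSS flag-bit) convention must be folded into the uniform-cost accounting so that comparing phrase counts is the same as comparing costs; since all phrases cost the same and the literal overhead is attached uniformly, minimizing the number of phrases minimizes the total cost, so ``optimal'' may be read as ``fewest phrases''. Second, one should check the base case $n = 0$ ($|{}| \ge |{}|$, trivial) and make sure the argument does not break when some $g_{n+1}$ or $o_{n+1}$ is empty or a bare literal; in the degenerate cases the inequality $q \le p$ together with the definitions still forces the same contradiction, because greedy takes the longest available match and a literal is always available.

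The main obstacle I expect is not the combinatorial skeleton — that is a routine ``greedy stays ahead'' induction — but making airtight the single membership step $w_k \in D_p$, i.e. verifying that the LZ77 dictionary variants we care about really do satisfy Definition~\ref{prp:dynsufclosed} and not merely the weaker ``natural'' version; that verification is Proposition~\ref{pro:lz77strong}, which we may invoke. The subtlety is that $D_p$ here is the dictionary \emph{at the time greedy reaches position $p$}, and one must be sure this is the same $D_p$ referenced in the definition (it is, since the dictionary state at a text position is parsing-independent for LZ77/LZ78-style dictionaries — the window contents depend only on the text consumed so far, not on how it was parsed). Once that alignment is made explicit, the rest is the Cohn--Khazan argument verbatim.
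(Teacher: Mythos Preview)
Your argument is correct, but it is \emph{not} the same proof as the paper's. The paper proceeds by an \emph{exchange} induction: it assumes an optimal parsing $\mathcal{O}$ whose first $n-1$ phrases already coincide with the greedy ones, locates the first index $h>n$ at which $|o_1\cdots o_h|\ge|g_1\cdots g_n|$, and uses Definition~\ref{prp:dynsufclosed} to splice the suffix $o_h^k$ after $g_n$, producing a new parsing $o_1\cdots o_{n-1}\,g_n\,o_h^k\,o_{h+1}\cdots o_p$; optimality of $\mathcal{O}$ then forces $h=n+1$, so the new parsing is optimal with one more greedy phrase in front. You instead run a direct \emph{greedy-stays-ahead} induction on the covered length, $|g_1\cdots g_n|\ge|o_1\cdots o_n|$, and invoke the strong suffix-closure exactly once per step to manufacture, at the greedy position $p$, a dictionary phrase longer than $g_{n+1}$ whenever the inequality would fail. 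Both routes hinge on the same single application of the ``and in $D_{i+k}$'' clause; yours avoids constructing auxiliary parsings and the case split on $h$, at the price of needing the explicit remark (which you correctly flag) that $D_p$ is determined by the text position alone and not by the parsing that reached it --- an assumption the paper makes tacitly by writing $D_i$ as a function of $i$.
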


\begin{proof}
The prove is by induction. We want to prove that for any $n$ smaller than or equal to the number of phrases of an optimal parsing, there exists an optimal parsing where the first $n$ phrases are greedy phrases.
The inductive hypothesis is that there exists an optimal parsing where the first $n-1$ phrases are greedy phrases. We will prove that there is an optimal parsing where the first $n$ phrases are greedy and, therefore, any greedy parsing is optimal.
We use here the notation $w^k$ to refer to the suffix of $w$ of length $|w|-k$.

Fixed a text $T$ and a strong suffix-closed dynamic dictionary $D$, let $\mathcal{O}=o_1o_2\cdots o_p=T$ be an optimal parsing and let $\mathcal{G}=g_1g_2\cdots g_q=T$ be the greedy parsing, where, obviously, $p\leq q$. 

The base of the induction with $n=0$ is obviously true. Let us prove the inductive step.

By inductive hypothesis,  $\forall\ i<n$ we have that $o_i=g_i$.
Since $g_{n}$ is greedy, then the $n$-th phrase of the greedy parsing is longer than or equal to the $n$-th phrase of the optimal parsing, i.e. $|g_{n}|\geq |o_{n}|$ and therefore $|o_1\cdots o_{n}|\leq |g_1\cdots g_{n}|$.

If $|g_{n}| = |o_{n}|$, then the thesis follows.
Otherwise, $|g_{n}| > |o_{n}|$ and $o_n$ is the first phrase in the optimal parsing that is not equal the $n$-th greedy phrase.

Let $h$ be the minimum number of optimal parsing phrases that overpass $g_n$ over the text, i.e. $h=min\{i\ |\ |o_1\cdots o_{i}| \geq |g_1\cdots g_{n}|\}$. Since $|g_{n}| > |o_{n}|$, then $h>n$.
If $|o_1\cdots o_{h}| = |g_1\cdots g_{n}|$, 
then the parsing $g_1 \cdots g_n o_{h+1} \cdots o_p$ 
uses a number of phrases strictly smaller than the number of phrases used by the optimal parsing that is a contradiction.
Therefore $|o_1\cdots o_{h}| > |g_1\cdots g_{n}|$. The reader can see this case reported in Figure \ref{fig:cohn}.

Let $|o_1\cdots o_{h-1}|=T_j$ be the text up to the $j$-th symbol. Then $o_h\in D_j$, where $D_j$ is the dynamic dictionary at the time $j$. 
Let $o_{h}^k$ the $k$-th suffix of $o_h$ with $k=|o_1\cdots o_{h}| - |g_1\cdots g_{n}|$. 
For the Property \ref{prp:dynsufclosed} of $D$, $o_{h}^k \in D_{j+k}$ and then there exists a parsing $o_1 \cdots o_{n-1} g_n o_{h}^k o_{h+1} \cdots  o_p$, where $g_n o_{h}^k=o_n\cdots o_h$.

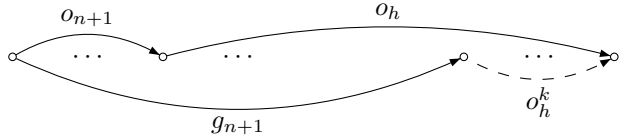
\begin{figure}[t] \centering{\footnotesize{\begin{picture}(120,20)(0,-10)
\gasset{Nw=1,Nh=1}
   \node(2)(20,0){}
   \node[Nframe=n](L)(30,0){$\cdots$} 
   \node(3)(40,0){}
   \node[Nframe=n](L)(50,0){$\cdots$} 
   \node(5)(80,0){}
   \node[Nframe=n](L)(90,0){$\cdots$} 
   \node(6)(100,0){}
   
   \drawedge[curvedepth=3](2,3){$o_{n+1}$}
   \drawedge[ELside=r,curvedepth=-7](2,5){$g_{n+1}$}
   \drawedge[curvedepth=4](3,6){$o_{h}$}
   \drawedge[ELside=r,dash={1.5}{1.5},curvedepth=-3](5,6){$o_{h}^k$}
\end{picture}}
 \vspace{0.30cm}
\caption{Detail of the differences between parsing $\mathcal{O}$ and parsing $\mathcal{G}$ over a text $T$ between positions $|o_1\cdots o_n|$ and $|o_1\cdots o_h|$. Nodes and dots represent the text and edges represent parsing phrases as reported on edge labels. The dashed edge $o_{h}^k$ represents a suffix of $o_{h}$.}
\label{fig:cohn2}
}\end{figure}

From the optimality of $\mathcal{O}$, it follows that $h=n+1$, otherwise there exists a parsing with less phrases than an optimal one. See Figure \ref{fig:cohn2}.
Therefore $o_1 \cdots o_{n-1} g_n o_{n+1}^k o_{n+2} \cdots  o_p$  is also  an optimal parsing. 
Since $o_1 \cdots o_{n-1}$ is equal to $g_1 \cdots g_{n-1}$, the thesis follows.
\end{proof}

\begin{corollary}
The greedy parsing is an optimal parsing for any version of the LZ77 dictionary.
\end{corollary}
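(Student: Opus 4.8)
The plan is to obtain the corollary as an immediate consequence of the two main results already established, namely Proposition \ref{pro:lz77strong} and Theorem \ref{the:cohngeneralizzato}. First I would recall that Proposition \ref{pro:lz77strong} shows that every variant of the LZ77 dictionary --- the unbounded one, where $D_i=\mbox{\emph{fact}}(T[0:i])$, and the bounded sliding-window one, where $D_i=\mbox{\emph{fact}}(T[i-h+1:i])$ --- satisfies the strong suffix-closed property in the sense of Definition \ref{prp:dynsufclosed}. In particular this covers the practical implementations that perform phrase deletions, since deletions only remove factors that have fallen outside the window and the key inclusion $w^k\in D_{i+k}$ for proper suffixes is exactly what the proof of Proposition \ref{pro:lz77strong} verifies. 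Second, I would invoke Theorem \ref{the:cohngeneralizzato}, which asserts that for any strong suffix-closed dynamic dictionary the greedy parsing is optimal under the uniform cost model. Composing the two gives the statement.

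The one point that deserves an explicit remark is the matching between the notion of ``parsing'' used in Theorem \ref{the:cohngeneralizzato} and the parsing actually produced by the LZ77/LZSS schemes, which after (or in place of) a dictionary phrase emit a single plain symbol. I would note that this causes no difficulty: either one regards each ``unit'' of the LZ77 output (a dictionary phrase possibly followed by a literal, or a literal alone) as a single parsing step, in which case the literal contributes the same uniform cost in both $\mathcal{O}$ and $\mathcal{G}$ and the argument of Theorem \ref{the:cohngeneralizzato} applies verbatim; or, as is standard, one observes that the empty word belongs to every $D_i$, so a lone literal is just the empty phrase followed by a literal, and the suffix-closedness argument is unaffected. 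In the LZW-style setting where the alphabet is preloaded and no explicit literals occur, the reduction is even more direct.

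I do not expect any real obstacle here --- the corollary is genuinely a corollary. The only place where care is needed, and where the bulk of the intellectual work has already been spent in the preceding section, is ensuring that the \emph{strong} (Definition \ref{prp:dynsufclosed}) rather than merely the \emph{natural} suffix-closed property holds for the bounded dictionary with deletions; that is precisely the content of Proposition \ref{pro:lz77strong}, so at this stage it can simply be cited. Hence the proof is a two-line deduction: LZ77 dictionaries (all versions) are strong suffix-closed by Proposition \ref{pro:lz77strong}, therefore by Theorem \ref{the:cohngeneralizzato} the greedy parsing is optimal for them.
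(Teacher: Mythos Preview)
Your proposal is correct and follows exactly the paper's own approach: the corollary is deduced in one line from Proposition~\ref{pro:lz77strong} (LZ77 dictionaries are strong suffix-closed) together with Theorem~\ref{the:cohngeneralizzato} (greedy is optimal for strong suffix-closed dynamic dictionaries). Your additional remarks about literals are fine but go beyond what the paper itself supplies.
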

The proof of the above corollary comes straightforward from the Theorem  \ref{the:cohngeneralizzato} and the Proposition \ref{pro:lz77strong}.

\medskip

To our best knowledge, this is the first proof of optimality of the greedy parsing that cover the original LZ77 dictionary case and almost all of the practical LZ77 dictionary implementations where the search buffer is a sliding windows on the text.

\bibliographystyle{abbrv}
\bibliography{compression}

\end{document}